\documentclass[lettersize,journal]{IEEEtran}
\usepackage{amsmath,amsfonts}
\usepackage{amssymb}
\usepackage{algorithmic}
\usepackage{algorithm}
\usepackage{array}
\usepackage[caption=false,font=normalsize,labelfont=sf,textfont=sf]{subfig}
\usepackage{textcomp}
\usepackage{stfloats}
\usepackage{url}
\usepackage{verbatim}
\usepackage{graphicx}
\usepackage{amsthm}

\newtheorem{lemma}{Lemma}
\newtheorem{theorem}{Theorem}
\newtheorem{corollary}{Corollary}
\newtheorem{proposition}{Proposition}
\newtheorem{remark}{Remark}

\usepackage{cite}
\begin{document}

\title{Conflict-Aware Robust Design for Covert Wireless Communications}

\author{Abbas Arghavani
\thanks{A. Arghavani is with the Department of Computer Science and Engineering, M{\"a}lardalen University, V{\"a}ster{\aa}s, Sweden (email: abbas.arghavani@mdu.se)}
%\thanks{Manuscript received April 19, 2021; revised August 16, 2021.}
}

% The paper headers
\markboth{Journal of \LaTeX\ Class Files,~Vol.~X, No.~X, mm~202X}{A. Arghavani: Conflict-Aware Robust Covert Wireless Communication Under Bounded Uncertainty}%
%{Shell \MakeLowercase{\textit{et al.}}: Title}

%\IEEEpubid{0000--0000/00\$00.00~\copyright~202X IEEE}

\maketitle

\begin{abstract}
Covert wireless communication aims to establish a reliable link while hiding the existence of the transmission from an adversary. In wireless settings, uncertainty plays a central role in this tradeoff: it can help mask the signal from a warden, but it also complicates robust system design. This raises a basic question: under bounded uncertainty, are reliability and covertness governed by the same adverse conditions? If not, then robust covert design cannot be reduced to a single generic worst-case environment. In this paper, we study this question in a covert wireless model with bounded uncertainty, quasi-static fading, outage-based reliability at Bob, and radiometric detection at Willie. Uncertainty is represented through bounded intervals for Bob's average channel strength and Willie's noise power. To obtain a tractable analytical characterization, we adopt a conditional large-$N$ midpoint-threshold surrogate for Willie's detector, parameterized by a fixed Willie-side fading realization. Within this framework, we show that the reliability constraint is governed by Bob's smallest admissible channel parameter, whereas the covertness constraint is governed by Willie's smallest admissible noise level. This establishes a conflict-aware robust-design principle: the adverse realizations for reliability and covertness are different. Based on this structural result, we derive closed-form expressions for the robustly feasible transmit power and the corresponding robust optimal rate. Numerical results show that bounded uncertainty contracts the feasible region, monotonically reduces the robust optimal rate, and can cause substantial loss relative to the nominal design. Monte Carlo results further show that the conditional surrogate closely tracks the midpoint-threshold radiometer in the intended low-effective-SNR regime. Overall, the paper shows that even in a streamlined wireless setting, robust covert design requires different adverse-case reasoning for reliability and covertness.
\end{abstract}

\begin{IEEEkeywords}
Covert wireless communication, robust design, bounded uncertainty, outage probability, radiometric detection, low probability of detection.
\end{IEEEkeywords}
\begin{center}
\footnotesize
\textit{This work has been submitted to the IEEE for possible publication.
Copyright may be transferred without notice, after which this version may no longer be accessible.}
\end{center}
\section{Introduction}
\label{sec:introduction}

\IEEEPARstart{C}{overt} communication concerns a stronger objective than ordinary secure communication. In classical secrecy, the main goal is to protect the message content from an eavesdropper. In covert communication, the goal is more demanding: the transmission itself should be difficult to detect \cite{bloch2016covert,wang2016fundamental,bash2015hiding,yan2019low}. In other words, even if an adversary listens to the channel, it should not be able to decide with high confidence whether communication is taking place.

This problem is especially relevant in wireless systems. Because wireless transmissions propagate through an open medium, they can often be observed by unintended receivers. A transmission may therefore reveal sensitive operational information even before any message is decoded. For this reason, covert communication has become an important topic in physical-layer security, especially for scenarios in which the mere presence of a transmission can be harmful \cite{bash2015hiding,yan2019low,chen2023survey}. This motivation also appears in emerging application domains such as in-body communication, where signal leakage may reveal sensitive medical information or even the presence of an implant itself \cite{padmal2025fat}.

A standard covert communication model involves three parties: a transmitter Alice, an intended receiver Bob, and a warden Willie. Alice wishes to send information to Bob, while Willie tries to determine whether Alice is transmitting or staying silent. This creates a fundamental tension. On the one hand, Alice must use enough signal power for Bob to decode reliably. On the other hand, increasing the signal power generally makes detection easier for Willie. Covert system design is therefore governed by a basic reliability-covertness tradeoff.

In wireless environments, this tradeoff is strongly influenced by uncertainty. The channel conditions observed by Bob and Willie are not perfectly stable, and the system designer may know them only approximately. Existing work has shown that uncertainty at the warden can play an important role in covert communication. Early studies demonstrated that uncertainty in the warden's noise power can significantly degrade detection performance and can even affect the fundamental scaling behavior of covert transmission \cite{goeckel2015covert,he2017covert}. Later works extended this line of research to settings involving interference uncertainty, fading, relay-assisted systems, and joint uncertainty in multiple parameters \cite{liu2018covert,shahzad2017covert,shahzad2020covert,wang2018covert,ta2019covert,ta2022covert}. Other studies considered more elaborate wireless architectures, including artificial noise, jamming, full-duplex receivers, relays, imperfect channel knowledge, and robust transceiver optimization \cite{shahzad2018achieving,soltani2018covert,li2020optimal,forouzesh2021robust,peng2021strategies,ma2021robust,he2022adaptive,he2023channel,jia2025robust,yang2024uav}.

At the same time, a covert system is only useful if Bob can still decode the message with acceptable reliability. In quasi-static fading channels, this requirement is naturally captured through outage probability, which has already been adopted in several covert wireless studies \cite{shahzad2017covert,shahzad2020covert,ta2019covert,arghavani2023covert}. As a result, practical covert wireless design must satisfy two requirements simultaneously: it must keep Bob's decoding failure sufficiently small while also keeping Willie's detection capability sufficiently weak.

Although the literature on covert wireless communication is now broad, one structural issue remains surprisingly unclear. When we design a covert wireless system robustly under bounded uncertainty, do the reliability and covertness constraints become most difficult under the same adverse parameter realization? This question matters because many robust formulations are intuitively interpreted through a single worst-case scenario. However, if the two constraints are stressed by different realizations, then that simplification is no longer valid. In that case, the design problem becomes inherently \emph{conflict-aware}: the legitimate link and the warden link are each governed by their own adverse conditions, and a feasible design must satisfy both at once.

In this paper, we study this structural question in a deliberately streamlined and analytically transparent wireless model. We consider one transmitter (Alice), one legitimate receiver (Bob), and one warden (Willie) under quasi-static fading. Reliability at Bob is measured by outage probability. Covertness at Willie is measured by the sum of false alarm and missed detection probabilities under radiometric detection. To isolate the effect of uncertainty without introducing unnecessary architectural complexity, we represent uncertainty through bounded intervals for two effective parameters: Bob's average channel strength and Willie's noise power. This focused model is amenable to closed-form analysis, yet still rich enough to reveal the robust-design structure of interest.

Our objective is not only to find a feasible transmit power under uncertainty, but also to understand how the robust covert design problem is organized. To do so, we first analyze a conditional large-$N$ midpoint-threshold surrogate for Willie's detector, parameterized by a fixed realization of the Willie-side fading gain. We then complement this with an averaged Rayleigh-fading benchmark on the Willie side. The analysis shows a clear asymmetry: the robust reliability constraint is governed by the smallest admissible Bob-side channel parameter, whereas the robust covertness constraint is governed by the smallest admissible Willie-side noise power. Therefore, even in this setting, robust covert wireless design is not driven by one common worst-case point. Instead, the two constraints are controlled by different adverse realizations.

The main contributions of this paper are summarized as follows:
\begin{enumerate}
\item We formulate a minimal robust covert wireless design problem under bounded uncertainty, quasi-static fading, outage-based reliability, and radiometric detection.

\item We derive a tractable Willie-side covertness surrogate based on a conditional large-$N$ midpoint-threshold approximation and use it to establish a conflict-aware robust-design principle. We further show, through an averaged Rayleigh-fading benchmark, that the same Willie-side worst-case noise boundary persists after averaging over the Willie-side fading gain.

\item We derive closed-form expressions for the robustly feasible transmit power and the corresponding robust optimal rate within this analytical framework.

\item We provide numerical results showing contraction of the robust feasible region, monotonic degradation of the robust optimal rate with uncertainty, and substantial rate loss relative to a nominal design.

\item We validate the analytical surrogate through Monte Carlo simulation of the radiometer under both the midpoint threshold and the exact likelihood-ratio threshold, and show that the two detector implementations are practically indistinguishable in the intended low-effective-SNR regime.
\end{enumerate}

The analysis is intentionally restricted to a focused setting so that the core structural message is easy to identify. In particular, the main closed-form design formulas are obtained under a conditional large-$N$ midpoint-threshold surrogate and are parameterized by a fixed Willie-side fading realization. This keeps the analysis tractable while still exposing the robust-design asymmetry of interest. The paper also includes an averaged Rayleigh-fading benchmark on the Willie side, and this benchmark leads to the same asymmetric-extremizer conclusion. Extensions to finite-block detection and richer adversarial uncertainty models are left for future work.

The remainder of the paper is organized as follows. Section~\ref{sec:related_work} reviews the most relevant literature. Sections~\ref{sec:system_model_minimal}--\ref{sec:main_results} develop the system model, robust formulation, and main analytical results. Section~\ref{sec:numerical_results} presents the numerical study. Section~\ref{sec:conclusion} concludes the paper.

\section{Related Work}
\label{sec:related_work}

Covert communication has its roots in foundational studies that established the basic meaning and limits of communication with a low probability of detection \cite{goeckel2015covert,bloch2016covert,wang2016fundamental,bash2015hiding}. These works clarified a key difference between covert communication and classical secrecy: in covert communication, the goal is not only to protect the message content, but also to make the transmission itself difficult for a warden to detect. They also showed that the warden's uncertainty can play a fundamental role in determining whether covert communication is possible and at what rate. Since then, a broad literature has developed on covert communication over noisy, fading, and state-dependent channels \cite{bloch2016covert,salehkalaibar2019covert,ahmadipour2019covert}. Broader background on the area is now available in survey and tutorial articles such as \cite{yan2019low,chen2023survey}. More recently, covert communication ideas have also begun to appear in emerging application domains beyond conventional over-the-air settings, including in-body communication, where tissue attenuation can itself contribute to covertness \cite{padmal2025fat}.

Within wireless covert communication, one major research direction studies how uncertainty at Willie can improve covertness. Early and influential works focused on uncertainty in the warden's noise power \cite{goeckel2015covert,he2017covert}. Later studies extended this idea to other sources of uncertainty, including interference, fading, channel variations, relay-assisted settings, and joint uncertainty in both channel and noise parameters \cite{liu2018covert,shahzad2017covert,shahzad2020covert,wang2018covert,ta2019covert,ta2022covert}. Related extensions have also examined multi-antenna detection and specialized wireless scenarios such as vehicular systems \cite{chen2023noise_multiantenna,duan2023covert}. Taken together, these works show that uncertainty is not merely an imperfection in covert wireless systems. In many cases, it is one of the main factors that shapes detectability and achievable performance.

A second line of research considers covert transmission under imperfect channel knowledge and related robust-design formulations. This includes, for example, transmission strategies under imperfect channel state information (CSI), robust power allocation, robust beamforming, adaptive power control under partial CSI, and robust designs for more specialized settings such as jammer-assisted, UAV, and satellite systems \cite{peng2021strategies,forouzesh2021robust,ma2021robust,he2022adaptive,jia2025robust,yang2024uav,xu2021robust,he2023channel}. These studies are important from a practical engineering perspective because they move beyond idealized assumptions and recognize that the designer often has only incomplete or imperfect knowledge of the wireless environment.

Another important theme in the literature is the tradeoff between covertness and useful communication performance. In fading channels, Bob-side reliability is often characterized through outage probability, effective covert rate, or related performance measures under covert constraints \cite{shahzad2017covert,shahzad2020covert,ta2019covert,arghavani2023covert}. There is also work that models covert transmission as a strategic interaction among transmitters, wardens, or assisting nodes \cite{arghavani2021game,arghavani2024dynamic}. More recent studies further consider stronger and more adaptive warden models, for example when Willie does not know the transmit power exactly and may exploit historical observations to improve detection \cite{he2024warden}. In parallel, a smaller body of work has begun to examine practical and application-specific covert communication settings with experimental components, including recent in-body communication results showing that tissue attenuation can support covert operation and that friendly jamming can further enhance covert throughput \cite{padmal2025fat}.

The present paper is related to these research directions, but its emphasis is different. Much of the existing literature introduces uncertainty either as a mechanism that can help covertness or as part of a robust design problem in a specific wireless architecture. Application-driven studies such as \cite{padmal2025fat} instead focus on the feasibility of covert operation in specialized physical environments. By contrast, our focus is on a more basic structural question: under bounded uncertainty, which adverse realizations govern the reliability constraint and which govern the covertness constraint? In the setting studied here, these two constraints are driven by different adverse parameters. This conflict-aware viewpoint is the main feature that distinguishes the present paper from prior work in wireless covert communication.

\section{System Model and Problem Formulation}
\label{sec:system_model_minimal}

We consider a covert wireless communication system with one transmitter Alice, one legitimate receiver Bob, and one warden Willie. Alice communicates with Bob over a wireless fading channel, while Willie observes the medium and attempts to decide whether Alice is transmitting or remaining silent. The wireless links are modeled as quasi-static fading over one transmission block, which is a standard and practically relevant model in prior covert wireless studies \cite{shahzad2020covert,ta2019covert}. Alice transmits over a block of $N$ channel uses.

\subsection{Signal Model}

Willie performs binary hypothesis testing between
\begin{align*}
\mathcal{H}_0 &: \text{Alice is silent},\\
\mathcal{H}_1 &: \text{Alice transmits}.
\end{align*}
Let $x[i]\sim\mathcal{CN}(0,1)$ denote the transmitted symbol, and let $P$ denote Alice's transmit power. Let $n_b[i]\sim\mathcal{CN}(0,\sigma_b^2)$ and $n_w[i]\sim\mathcal{CN}(0,\sigma_w^2)$ denote the receiver noises at Bob and Willie, respectively, where $\sigma_b^2$ and $\sigma_w^2$ are the corresponding noise powers. Let $h_b\sim\mathcal{CN}(0,\Omega_b)$ and $h_w\sim\mathcal{CN}(0,\Omega_w)$ denote the Alice--Bob and Alice--Willie fading coefficients, where $\Omega_b$ and $\Omega_w$ are the corresponding average channel powers.

Under these two hypotheses, the received signals at Bob and Willie are given by
\begin{align*}
y_b[i] &=
\begin{cases}
n_b[i], & \mathcal{H}_0,\\
h_b \sqrt{P}\, x[i] + n_b[i], & \mathcal{H}_1,
\end{cases}\\
y_w[i] &=
\begin{cases}
n_w[i], & \mathcal{H}_0,\\
h_w \sqrt{P}\, x[i] + n_w[i], & \mathcal{H}_1,
\end{cases}
\qquad i=1,\dots,N.
\end{align*}
We assume quasi-static block fading, so that $h_b$ and $h_w$ remain constant during one transmission block and change independently from one block to the next \cite{shahzad2020covert}.

In this model, reliability at Bob is measured through outage probability, while covertness at Willie is measured through the sum of false alarm and missed detection probabilities \cite{arghavani2023covert,shahzad2020covert,ta2019covert}.

\subsection{Uncertainty Model}

To isolate the robust-design effect of interest, we introduce bounded uncertainty directly into the two effective parameters that govern reliability and covertness:
\begin{align*}
\Omega_b &\in [\Omega_b^{-},\Omega_b^{+}],\\
\sigma_w^2 &\in [\sigma_w^{2,-},\sigma_w^{2,+}].
\end{align*}
Thus, uncertainty on the Bob side is represented through the average channel strength $\Omega_b$, while uncertainty on the Willie side is represented through the noise power $\sigma_w^2$. In the present model, $\Omega_w$ and $\sigma_b^2$ are treated as known constants.

The uncertainty in $\sigma_w^2$ should be interpreted from the designer's perspective. Specifically, Alice and Bob know only that Willie's true noise power lies in the interval $[\sigma_w^{2,-},\sigma_w^{2,+}]$. Willie, however, is conservatively assumed to know the realized value of $\sigma_w^2$ when configuring his detector. This avoids attributing an artificial disadvantage to the warden.

\subsection{Reliability Metric at Bob}

Alice transmits at a fixed coding rate $R$ bits/channel use. Following the outage-based treatment commonly used in quasi-static covert wireless models, Bob is said to be in outage whenever the instantaneous channel cannot support the target rate \cite{shahzad2020covert,arghavani2023covert}. The instantaneous signal-to-noise ratio (SNR) at Bob is
\begin{equation*}
\gamma_b = \frac{P|h_b|^2}{\sigma_b^2}.
\end{equation*}
Accordingly, the outage probability is
\begin{equation*}
P_{\mathrm{out}}(\Omega_b;P,R)
=
\Pr\!\left(
\log_2(1+\gamma_b) < R
\right).
\end{equation*}
Since $|h_b|^2$ is exponentially distributed with mean $\Omega_b$, this becomes
\begin{equation}
P_{\mathrm{out}}(\Omega_b;P,R)
=
1-\exp\!\left(
-\frac{(2^R-1)\sigma_b^2}{P\Omega_b}
\right).
\label{eq:pout_minimal}
\end{equation}
For fixed $(P,R)$, the outage probability decreases monotonically with $\Omega_b$. Therefore, the most adverse Bob-side realization for reliability is $\Omega_b=\Omega_b^{-}$.

\subsection{Covertness Metric at Willie}

Willie is modeled as a radiometer, or energy detector, with test statistic
\begin{equation}
T(y_w)=\frac{1}{N}\sum_{i=1}^N |y_w[i]|^2
\mathop{\gtrless}_{\mathcal H_0}^{\mathcal H_1} \lambda,
\label{eq:radiometer_clean}
\end{equation}
where $\lambda$ is the detection threshold. Based on this test, the false alarm probability $P_{\mathrm{FA}}$ is the probability that Willie decides Alice is transmitting when she is actually silent, while the missed detection probability $P_{\mathrm{MD}}$ is the probability that Willie decides Alice is silent when she is actually transmitting. These quantities are given by
\begin{align*}
P_{\mathrm{FA}}(\lambda;\sigma_w^2)
&= \Pr(T(y_w) \ge \lambda \mid \mathcal{H}_0),\\
P_{\mathrm{MD}}(\lambda;P,\sigma_w^2,g_w)
&= \Pr(T(y_w) < \lambda \mid \mathcal{H}_1, g_w).
\end{align*}

In the analytical framework of this paper, the Willie-side covertness metric is studied conditionally on the realized Willie-side fading power $g_w = |h_w|^2$, which is exponentially distributed with mean $\Omega_w$. Specifically, we define the minimum conditional total detection error probability as
\begin{equation}
\label{eq:xi_conditional_exact_definition}
\xi^\star_{\mathrm{cond}}(P,\sigma_w^2 \mid g_w)
\triangleq
\min_{\lambda}
\left[
P_{\mathrm{FA}}(\lambda;\sigma_w^2)
+
P_{\mathrm{MD}}(\lambda;P,\sigma_w^2,g_w)
\right].
\end{equation}
The corresponding conditional covertness requirement is
\begin{equation*}
\xi^\star_{\mathrm{cond}}(P,\sigma_w^2 \mid g_w)\ge 1-\epsilon,
\end{equation*}
where $\epsilon\in(0,1)$ is the allowed covertness deficit.

This conditional formulation is used as an analytical device to keep the Willie-side problem tractable. It should not be interpreted as assuming that Alice knows the realized Willie-side fading gain when selecting $(P,R)$. Rather, the conditioning is introduced only to expose the structure of the covertness constraint in a form that can later be reduced analytically.

\subsection{Robust Design Problem}

Alice chooses $(P,R)$ before the exact uncertain parameters are known. The Bob-side reliability requirement must hold for all $\Omega_b\in[\Omega_b^{-},\Omega_b^{+}]$, and the Willie-side covertness requirement must hold for all $\sigma_w^2\in[\sigma_w^{2,-},\sigma_w^{2,+}]$ within the conditional formulation above. Let $\delta\in(0,1)$ denote the maximum allowable outage probability at Bob, and let $P_{\max}$ denote the peak transmit-power budget. The resulting design problem is
\begin{align}
\max_{P,R}\quad & R
\label{eq:robust_design_problem}\\
\text{s.t.}\quad
& \sup_{\Omega_b\in[\Omega_b^{-},\Omega_b^{+}]}
P_{\mathrm{out}}(\Omega_b;P,R)\le \delta,\nonumber
\\
& \inf_{\sigma_w^2\in[\sigma_w^{2,-},\sigma_w^{2,+}]}
\xi^\star_{\mathrm{cond}}(P,\sigma_w^2 \mid g_w)\ge 1-\epsilon,
\label{eq:robust_covertness_constraint}\\
& 0\le P\le P_{\max}.\nonumber
\end{align}

From \eqref{eq:pout_minimal}, the reliability constraint is already seen to be governed by the smallest admissible value $\Omega_b^{-}$. By contrast, the covertness constraint is still expressed through the exact conditional quantity $\xi^\star_{\mathrm{cond}}(P,\sigma_w^2 \mid g_w)$. The next section develops a tractable large-$N$ surrogate for this Willie-side term and then uses it to obtain a closed-form robust reduction. For comparison, the numerical results also include a nominal benchmark obtained by evaluating the same analytical framework at the reference values $\Omega_{b,0}$ and $\sigma_{w,0}^2$, rather than at the adverse realizations.

\section{Willie Detection Analysis and Robust Power Limit}
\label{sec:willie_analysis}

We now derive a tractable approximation for the Willie-side covertness metric in \eqref{eq:robust_covertness_constraint}. The goal is to replace the exact conditional covertness term by a simple analytical surrogate that yields an explicit robust power constraint.

\subsection{Conditional Detection Model and Large-$N$ Surrogate}

Define
\begin{equation*}
g_w \triangleq |h_w|^2.
\end{equation*}
Conditioned on the realized value of $g_w$, and using the Gaussian signaling assumption $x[i]\sim\mathcal{CN}(0,1)$, we have
\begin{align*}
y_w[i]\mid \mathcal H_0 &\sim \mathcal{CN}(0,\sigma_w^2),\\
y_w[i]\mid \mathcal H_1,g_w &\sim \mathcal{CN}(0,\sigma_w^2 + Pg_w).
\end{align*}

Since the squared magnitude of a circularly symmetric complex Gaussian random variable is exponentially distributed, the received signal power $|y_w[i]|^2$ under each hypothesis follows an exponential distribution. Consequently, the radiometer statistic in \eqref{eq:radiometer_clean} satisfies
\begin{align*}
\mathbb{E}[T \mid \mathcal H_0] &= \sigma_w^2,\\
\mathrm{Var}(T \mid \mathcal H_0) &= \frac{\sigma_w^4}{N},\\
\mathbb{E}[T \mid \mathcal H_1,g_w] &= \sigma_w^2 + Pg_w,\\
\mathrm{Var}(T \mid \mathcal H_1,g_w) &= \frac{(\sigma_w^2+Pg_w)^2}{N}.
\end{align*}
For large $N$, the central limit theorem gives
\begin{align*}
T\mid \mathcal H_0
&\approx
\mathcal N\!\left(
\sigma_w^2,\,
\frac{\sigma_w^4}{N}
\right),
\\
T\mid \mathcal H_1,g_w
&\approx
\mathcal N\!\left(
\sigma_w^2+Pg_w,\,
\frac{(\sigma_w^2+Pg_w)^2}{N}
\right).
\end{align*}

Starting from the exact conditional definition in \eqref{eq:xi_conditional_exact_definition}, we focus on the low-effective-SNR regime
\begin{equation}
Pg_w \ll \sigma_w^2,
\label{eq:low_snr_clean}
\end{equation}
in which Willie's total detection error remains close to one. Under \eqref{eq:low_snr_clean}, the two hypothesis-dependent variances are approximately equal:
\begin{equation*}
\frac{(\sigma_w^2+Pg_w)^2}{N}
\approx
\frac{\sigma_w^4}{N}.
\end{equation*}
Hence, we use a common variance
\begin{equation*}
v_T \triangleq \frac{\sigma_w^4}{N},
\end{equation*}
and means
\begin{equation*}
\mu_0 \triangleq \sigma_w^2,
\qquad
\mu_1 \triangleq \sigma_w^2+Pg_w.
\end{equation*}
For equal-variance Gaussian hypotheses, the threshold minimizing $P_{\mathrm{FA}}+P_{\mathrm{MD}}$ is the midpoint
\begin{equation}
\lambda^\star
=
\frac{\mu_0+\mu_1}{2}
=
\sigma_w^2+\frac{Pg_w}{2}.
\label{eq:midpoint_threshold}
\end{equation}
Substituting \eqref{eq:midpoint_threshold} into the Gaussian approximation yields the surrogate
\begin{equation}
\xi_{\mathrm{cond}}^{(\mathrm{apx})}(P,\sigma_w^2 \mid g_w)
\triangleq
2Q\!\left(\frac{\sqrt{N}\,P\,g_w}{2\sigma_w^2}\right).
\label{eq:xi_conditional_approx}
\end{equation}
Equation \eqref{eq:xi_conditional_approx} is the key working approximation used in the remainder of the analysis, where $Q(\cdot)$ denotes the standard Gaussian tail probability function.

\subsection{Monotonicity and Robust Covertness Reduction}

The surrogate in \eqref{eq:xi_conditional_approx} has two simple monotonicity properties that drive the robust reduction.

\begin{lemma}[Monotonicity in Willie-side noise power]
\label{lem:xi_sigma}
For fixed $P>0$, $N>0$, and $g_w>0$, the surrogate
\[
\sigma_w^2 \mapsto \xi_{\mathrm{cond}}^{(\mathrm{apx})}(P,\sigma_w^2 \mid g_w)
\]
is strictly increasing.
\end{lemma}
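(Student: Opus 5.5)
The plan is to write the surrogate in \eqref{eq:xi_conditional_approx} as a composition of two one-dimensional maps and track the monotonicity of each. Set
\[
u(\sigma_w^2) \triangleq \frac{\sqrt{N}\,P\,g_w}{2\sigma_w^2},
\]
so that $\xi_{\mathrm{cond}}^{(\mathrm{apx})}(P,\sigma_w^2\mid g_w) = 2Q\big(u(\sigma_w^2)\big)$. Since $P>0$, $N>0$ and $g_w>0$, the numerator $\sqrt{N}Pg_w/2$ is a strictly positive constant. Hence $u$ is strictly decreasing on $\sigma_w^2\in(0,\infty)$, because $t\mapsto c/t$ with $c>0$ is strictly decreasing for $t>0$. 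Explicitly, $du/d\sigma_w^2 = -\sqrt{N}Pg_w/(2\sigma_w^4) < 0$.

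The second ingredient is that $Q$ is strictly decreasing on all of $\mathbb{R}$. This follows from
\[
Q'(x) = -\frac{1}{\sqrt{2\pi}}e^{-x^2/2} < 0 .
\]
The composition of two strictly decreasing maps is strictly increasing, and multiplying by the positive constant $2$ preserves this.

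For a derivative form of the same argument, the chain rule gives
\[
\frac{\partial \xi_{\mathrm{cond}}^{(\mathrm{apx})}}{\partial \sigma_w^2}
= 2\cdot\left(-\frac{1}{\sqrt{2\pi}}e^{-u^2/2}\right)\cdot\left(-\frac{\sqrt{N}Pg_w}{2\sigma_w^4}\right)
= \frac{\sqrt{N}Pg_w}{\sqrt{2\pi}\,\sigma_w^4}\,e^{-u^2/2} > 0 .
\]
This derivative is strictly positive for every $\sigma_w^2>0$, which gives strict monotonicity on the whole domain. I would state the result on $(0,\infty)$, which contains the uncertainty interval $[\sigma_w^{2,-},\sigma_w^{2,+}]$, implicitly assumed positive.

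There is no real obstacle here. The only point needing care is that strictness requires all of $P$, $N$ and $g_w$ to be strictly positive, as the lemma assumes. If any of them were zero, then $u\equiv 0$ and the surrogate would be identically $2Q(0)=1$, which is constant rather than strictly increasing.
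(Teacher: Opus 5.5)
Your proposal is correct and matches the paper's proof: both write the surrogate as $2Q(a/\sigma_w^2)$ with $a=\sqrt{N}\,P\,g_w/2>0$ and conclude from the composition of two strictly decreasing maps. Your chain-rule derivative check and your remark on why strictness needs $P,N,g_w>0$ are correct additions, but the argument is the same.
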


\begin{proof}
From \eqref{eq:xi_conditional_approx},
\[
\xi_{\mathrm{cond}}^{(\mathrm{apx})}(P,\sigma_w^2 \mid g_w)
=
2Q\!\left(\frac{\sqrt{N}\,P\,g_w}{2\sigma_w^2}\right).
\]
For fixed $P>0$, let
\[
a \triangleq \frac{\sqrt{N}\,P\,g_w}{2} > 0.
\]
Then
\[
\xi_{\mathrm{cond}}^{(\mathrm{apx})}(P,\sigma_w^2 \mid g_w)
=
2Q\!\left(\frac{a}{\sigma_w^2}\right).
\]
Since $Q(\cdot)$ is strictly decreasing and $a/\sigma_w^2$ is strictly decreasing in $\sigma_w^2$, the result follows.
\end{proof}

\begin{lemma}[Monotonicity in transmit power]
\label{lem:xi_power}
For fixed $\sigma_w^2>0$, $N>0$, and $g_w>0$, the surrogate
\[
P \mapsto \xi_{\mathrm{cond}}^{(\mathrm{apx})}(P,\sigma_w^2 \mid g_w)
\]
is strictly decreasing.
\end{lemma}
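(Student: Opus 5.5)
The plan mirrors the proof of Lemma~\ref{lem:xi_sigma}: write the surrogate as a composition of a linear map in $P$ with the strictly decreasing function $Q(\cdot)$, and read off the monotonicity. Starting from \eqref{eq:xi_conditional_approx}, I would fix $\sigma_w^2>0$, $N>0$, $g_w>0$ and set
\[
b \triangleq \frac{\sqrt{N}\,g_w}{2\sigma_w^2} > 0,
\]
so that
\[
\xi_{\mathrm{cond}}^{(\mathrm{apx})}(P,\sigma_w^2 \mid g_w) = 2Q(bP).
\]

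Since $b>0$, the inner map $P\mapsto bP$ is strictly increasing on $P\ge 0$. The function $Q(x)=\int_x^\infty \frac{1}{\sqrt{2\pi}}e^{-t^2/2}\,dt$ is strictly decreasing on $\mathbb{R}$, because its derivative $-\frac{1}{\sqrt{2\pi}}e^{-x^2/2}$ is strictly negative everywhere. The composition of a strictly decreasing function with a strictly increasing one is strictly decreasing, and multiplying by the positive constant $2$ preserves this, which gives the claim.

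For completeness, I would also record the explicit derivative:
\[
\frac{\partial}{\partial P}\xi_{\mathrm{cond}}^{(\mathrm{apx})}(P,\sigma_w^2 \mid g_w) = -\frac{2b}{\sqrt{2\pi}}\,e^{-b^2P^2/2} < 0.
\]
This form is convenient later, when inverting the robust covertness constraint for the power limit.

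There is no real obstacle. The only points to state explicitly are the positivity of $b$, which needs $g_w>0$ and $\sigma_w^2<\infty$, and the strict decrease of $Q$.
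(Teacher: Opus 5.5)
Your proposal is correct and follows essentially the same route as the paper: the argument $\sqrt{N}\,P\,g_w/(2\sigma_w^2)$ of $Q(\cdot)$ is strictly increasing in $P$, and $Q(\cdot)$ is strictly decreasing, so the composition is strictly decreasing. The explicit derivative you record is correct but not needed for the lemma.
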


\begin{proof}
From \eqref{eq:xi_conditional_approx},
\[
\xi_{\mathrm{cond}}^{(\mathrm{apx})}(P,\sigma_w^2 \mid g_w)
=
2Q\!\left(\frac{\sqrt{N}\,P\,g_w}{2\sigma_w^2}\right).
\]
For fixed $\sigma_w^2>0$, the argument of the $Q$-function is strictly increasing in $P$. Since $Q(\cdot)$ is strictly decreasing, the result follows.
\end{proof}

By Lemma~\ref{lem:xi_sigma},
\begin{equation}
\inf_{\sigma_w^2 \in [\sigma_w^{2,-},\,\sigma_w^{2,+}]}
\xi_{\mathrm{cond}}^{(\mathrm{apx})}(P,\sigma_w^2 \mid g_w)
=
\xi_{\mathrm{cond}}^{(\mathrm{apx})}(P,\sigma_w^{2,-} \mid g_w).
\label{eq:conditional_covertness_reduction}
\end{equation}
Hence, the robust covertness constraint reduces to
\begin{equation}
2Q\!\left(
\frac{\sqrt{N}\,P\,g_w}{2\sigma_w^{2,-}}
\right)
\ge 1-\epsilon.
\label{eq:conditional_covertness_constraint_reduced}
\end{equation}
Solving \eqref{eq:conditional_covertness_constraint_reduced} for $P$ gives the covertness-induced power ceiling
\begin{equation*}
P \le P^\star_{\mathrm{cov}}
\triangleq
\frac{2\sigma_w^{2,-}}{\sqrt{N}\,g_w}
Q^{-1}\!\left(\frac{1-\epsilon}{2}\right).
\end{equation*}
Therefore, the feasible transmit power must satisfy
\begin{equation*}
0 \le P \le \min\{P_{\max},\,P^\star_{\mathrm{cov}}\}.
\end{equation*}

\begin{proposition}[Averaged Willie-side benchmark]
\label{prop:avg_willie_benchmark}
Assume
\[
g_w \sim \mathrm{Exp}\!\left(\frac{1}{\Omega_w}\right),
\]
and define the averaged surrogate
\[
\bar{\xi}_{\mathrm{cond}}^{(\mathrm{apx})}(P,\sigma_w^2)
\triangleq
\mathbb{E}_{g_w}
\!\left[
\xi_{\mathrm{cond}}^{(\mathrm{apx})}(P,\sigma_w^2 \mid g_w)
\right].
\]
Then, for fixed $P>0$,
\begin{equation}
\begin{aligned}
&\bar{\xi}_{\mathrm{cond}}^{(\mathrm{apx})}(P,\sigma_w^2)
=
1
-
2\exp\!\left(\frac{\eta^2}{2}\right)Q(\eta),
\\
&\eta \triangleq \frac{2\sigma_w^2}{\sqrt{N}\,P\,\Omega_w}.
\label{eq:xi_avg_closed_form}
\end{aligned}
\end{equation}
Moreover, for fixed $P>0$, $N>0$, and $\Omega_w>0$, the function
\[
\sigma_w^2 \mapsto \bar{\xi}_{\mathrm{cond}}^{(\mathrm{apx})}(P,\sigma_w^2)
\]
is strictly increasing. Consequently,
\begin{equation}
\inf_{\sigma_w^2 \in [\sigma_w^{2,-},\,\sigma_w^{2,+}]}
\bar{\xi}_{\mathrm{cond}}^{(\mathrm{apx})}(P,\sigma_w^2)
=
\bar{\xi}_{\mathrm{cond}}^{(\mathrm{apx})}(P,\sigma_w^{2,-}).
\label{eq:avg_covertness_reduction}
\end{equation}
\end{proposition}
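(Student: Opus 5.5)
The plan has three parts: derive the closed form by computing the expectation directly, read off monotonicity from the closed form, and obtain the infimum as in \eqref{eq:conditional_covertness_reduction}.

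\textbf{Closed form.} Write $c \triangleq \sqrt{N}P/(2\sigma_w^2)$, so that $\xi_{\mathrm{cond}}^{(\mathrm{apx})}=2Q(c\,g_w)$. Then
\[
\bar{\xi}=\frac{2}{\Omega_w}\int_0^\infty Q(cg)\,e^{-g/\Omega_w}\,dg .
\]
I will integrate by parts, taking $u=Q(cg)$ and $dv=e^{-g/\Omega_w}dg/\Omega_w$. The boundary term is $Q(0)=1/2$, and $Q'(x)=-\phi(x)$. This gives
\[
\bar{\xi}=1-2c\int_0^\infty \phi(cg)\,e^{-g/\Omega_w}\,dg .
\]
Next substitute $t=cg$. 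The remaining integral becomes $\int_0^\infty \phi(t)e^{-\eta t}\,dt$ with $\eta=1/(c\Omega_w)=2\sigma_w^2/(\sqrt{N}P\Omega_w)$, which matches \eqref{eq:xi_avg_closed_form}. Completing the square, $-t^2/2-\eta t=-(t+\eta)^2/2+\eta^2/2$, so the integral equals $e^{\eta^2/2}Q(\eta)$. Combining these steps yields the stated formula. All of this is routine; I only need to check that the boundary term at infinity vanishes, which is immediate.

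\textbf{Monotonicity.} The map $\sigma_w^2\mapsto\eta$ is strictly increasing and linear. It therefore suffices to show that $h(\eta)\triangleq e^{\eta^2/2}Q(\eta)$ (the Mills-ratio-type function $Q/\phi$ up to the constant $\sqrt{2\pi}$) is strictly decreasing on $(0,\infty)$. Differentiating,
\[
h'(\eta)=e^{\eta^2/2}\bigl(\eta Q(\eta)-\phi(\eta)\bigr).
\]
So I need the strict inequality $\eta Q(\eta)<\phi(\eta)$ for $\eta>0$, the classical Gaussian tail bound. I will prove it from
\[
\eta Q(\eta)=\int_\eta^\infty \eta\,\phi(t)\,dt<\int_\eta^\infty t\,\phi(t)\,dt=\phi(\eta).
\]

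A cleaner alternative avoids the closed form entirely. The integrand $2Q(\sqrt{N}Pg/(2\sigma_w^2))$ is strictly increasing in $\sigma_w^2$ for every $g>0$ by Lemma~\ref{lem:xi_sigma}, and strict monotonicity is preserved under expectation against a density positive on $(0,\infty)$. I would mention this as a confirmation, but present the derivative argument since it uses the closed form.

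\textbf{Infimum.} Strict increase in $\sigma_w^2$ on the compact interval $[\sigma_w^{2,-},\sigma_w^{2,+}]$ puts the infimum at the left endpoint, which gives \eqref{eq:avg_covertness_reduction}. The only mildly delicate step is the Mills-ratio inequality, and it is handled by the one-line integral comparison above.
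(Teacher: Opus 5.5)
Your proof is correct. The closed-form derivation is essentially the paper's: the paper writes $Q(x)=\int_x^\infty\phi(t)\,dt$ and exchanges the order of integration to reach $1-2\int_0^\infty\phi(t)e^{-\beta t}\,dt$. You reach the same integral by integrating by parts, and both then complete the square.

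The real difference is the monotonicity step. The paper does not use the closed form there. It only notes that the integrand $2Q\bigl(\sqrt{N}Pg/(2\sigma_w^2)\bigr)$ is strictly increasing in $\sigma_w^2$ for every $g>0$ (Lemma~\ref{lem:xi_sigma}), and that the exponential density is positive on $(0,\infty)$. That is exactly the argument you keep as a side remark. This route is shorter and does not depend on the Rayleigh assumption: any Willie-side fading law with $\Pr(g_w>0)>0$ gives the same conclusion.

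Your primary route uses $h'(\eta)=e^{\eta^2/2}\bigl(\eta Q(\eta)-\phi(\eta)\bigr)<0$ and the tail bound $\eta Q(\eta)<\phi(\eta)$. It is tied to the exponential law, but it yields more. It shows that $\bar{\xi}$ depends on $(P,\sigma_w^2,N,\Omega_w)$ only through $\eta$, and increases strictly from $0$ to $1$ as $\eta$ runs over $(0,\infty)$. Consequently, the averaged robust constraint $\bar{\xi}(P,\sigma_w^{2,-})\ge 1-\epsilon$ is equivalent to $\eta\ge\eta_\epsilon$, where $\eta_\epsilon$ is the unique root of $2e^{\eta^2/2}Q(\eta)=\epsilon$. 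That is, it is equivalent to the power ceiling $P\le 2\sigma_w^{2,-}/(\sqrt{N}\,\Omega_w\,\eta_\epsilon)$. This has the same form as $P^\star_{\mathrm{cov}}$, with the constant $1/\eta_\epsilon$ in place of $Q^{-1}\bigl((1-\epsilon)/2\bigr)$ and $\Omega_w$ in place of $g_w$.
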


\begin{proof}
Let
\[
a \triangleq \frac{\sqrt{N}\,P}{2\sigma_w^2},
\qquad
f_{g_w}(g)=\frac{1}{\Omega_w}e^{-g/\Omega_w}, \quad g\ge 0.
\]
Using \eqref{eq:xi_conditional_approx},
\[
\bar{\xi}_{\mathrm{cond}}^{(\mathrm{apx})}(P,\sigma_w^2)
=
\frac{2}{\Omega_w}
\int_0^\infty
Q(ag)\,e^{-g/\Omega_w}\,dg.
\]
Now write
\[
Q(x)=\int_x^\infty \phi(t)\,dt,
\qquad
\phi(t)=\frac{1}{\sqrt{2\pi}}e^{-t^2/2}.
\]
Exchanging the order of integration gives
\[
\bar{\xi}_{\mathrm{cond}}^{(\mathrm{apx})}(P,\sigma_w^2)
=
2\int_0^\infty
\phi(t)
\left(
1-e^{-t/(a\Omega_w)}
\right)
dt.
\]
Hence,
\[
\bar{\xi}_{\mathrm{cond}}^{(\mathrm{apx})}(P,\sigma_w^2)
=
1
-
2\int_0^\infty
\phi(t)e^{-\beta t}\,dt,
\quad
\beta \triangleq \frac{1}{a\Omega_w}.
\]
Completing the square yields
\[
\int_0^\infty \phi(t)e^{-\beta t}\,dt
=
e^{\beta^2/2}Q(\beta),
\]
so that
\[
\bar{\xi}_{\mathrm{cond}}^{(\mathrm{apx})}(P,\sigma_w^2)
=
1-2e^{\beta^2/2}Q(\beta).
\]
Substituting
\[
\beta=\frac{2\sigma_w^2}{\sqrt{N}\,P\,\Omega_w}
\]
gives \eqref{eq:xi_avg_closed_form}. Strict monotonicity in $\sigma_w^2$ follows because the integrand in \eqref{eq:xi_conditional_approx} is strictly increasing in $\sigma_w^2$ for every $g_w>0$ by Lemma~\ref{lem:xi_sigma}, and the exponential distribution has full support on $(0,\infty)$. The reduction \eqref{eq:avg_covertness_reduction} follows immediately.
\end{proof}

The averaged benchmark in Proposition~\ref{prop:avg_willie_benchmark} does not yield the same simple closed-form power ceiling as the conditional design formulas developed above, but it confirms that the same Willie-side adverse noise boundary $\sigma_w^{2,-}$ remains governing even after averaging over Rayleigh fading.

\section{Main Structural Results}
\label{sec:main_results}

We now combine the Bob-side monotonicity of the outage probability with the Willie-side monotonicity of the surrogate covertness metric to obtain the main structural conclusion of the paper.

\begin{lemma}[Worst-case Bob-side realization]
\label{lem:bob_worst_case}
For fixed $P>0$ and $R>0$, the outage probability
$P_{\mathrm{out}}(\Omega_b;P,R)$ in \eqref{eq:pout_minimal} is strictly decreasing in $\Omega_b$. Consequently,
\[
\sup_{\Omega_b \in [\Omega_b^{-},\,\Omega_b^{+}]}
P_{\mathrm{out}}(\Omega_b;P,R)
=
P_{\mathrm{out}}(\Omega_b^{-};P,R).
\]
\end{lemma}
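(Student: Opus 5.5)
The plan is to work directly from the closed form \eqref{eq:pout_minimal} and write the outage probability as a composition of elementary monotone maps. With $P>0$ and $R>0$ fixed, I set
\[
c \triangleq \frac{(2^R-1)\sigma_b^2}{P} .
\]
Since $R>0$ gives $2^R>1$, and $\sigma_b^2>0$ and $P>0$, we have $c>0$. Then
\[
P_{\mathrm{out}}(\Omega_b;P,R)=1-\exp\!\left(-\frac{c}{\Omega_b}\right),
\]
so everything reduces to the behaviour of $\Omega_b\mapsto c/\Omega_b$ on $(0,\infty)$.

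The key steps, in order, are as follows.

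\begin{enumerate}
\item Because $c>0$, the map $\Omega_b\mapsto c/\Omega_b$ is strictly decreasing on $(0,\infty)$. Hence $\Omega_b\mapsto -c/\Omega_b$ is strictly increasing.
\item Composing with the strictly increasing function $\exp(\cdot)$ keeps strict increase, so $\exp(-c/\Omega_b)$ is strictly increasing in $\Omega_b$.
\item Applying $u\mapsto 1-u$, which reverses order strictly, shows that $P_{\mathrm{out}}$ is strictly decreasing in $\Omega_b$.
\item As a cross-check, one can differentiate:
\[
\frac{\partial P_{\mathrm{out}}}{\partial \Omega_b}=-\frac{c}{\Omega_b^2}\exp\!\left(-\frac{c}{\Omega_b}\right)<0 .
\]
\item For the supremum, a strictly decreasing function on the compact interval $[\Omega_b^{-},\Omega_b^{+}]$ (with $\Omega_b^{-}>0$, as average channel powers are positive) attains its maximum at the left endpoint. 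Therefore the supremum is a maximum and equals $P_{\mathrm{out}}(\Omega_b^{-};P,R)$.
\end{enumerate}

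There is no real obstacle here. The only point needing care is making the implicit positivity assumptions explicit: $\sigma_b^2>0$ and $\Omega_b^{-}>0$. These ensure that $c>0$ and that the map is well defined on the whole interval. Note that strictness would fail if $R=0$, since then $c=0$ and $P_{\mathrm{out}}\equiv 0$; this is why the hypothesis $R>0$ is needed.
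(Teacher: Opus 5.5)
Your proposal is correct and follows essentially the paper's proof. The paper introduces the same constant $a(P,R)=(2^R-1)\sigma_b^2/P>0$, rewrites $P_{\mathrm{out}}=1-\exp(-a/\Omega_b)$, and shows that the derivative $-\frac{a}{\Omega_b^2}\exp(-a/\Omega_b)$ is negative (your step 4 is its main argument), while your composition-of-monotone-maps argument, the explicit positivity assumptions, and the left-endpoint step for the supremum only make the same reasoning slightly more complete.
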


\begin{proof}
Let
\[
a(P,R) \triangleq \frac{(2^R-1)\sigma_b^2}{P} > 0.
\]
Using \eqref{eq:pout_minimal},
\[
P_{\mathrm{out}}(\Omega_b;P,R)
=
1-\exp\!\left(-\frac{a(P,R)}{\Omega_b}\right).
\]
Differentiating with respect to $\Omega_b$ gives
\[
\frac{\partial P_{\mathrm{out}}}{\partial \Omega_b}
=
-
\frac{a(P,R)}{\Omega_b^2}
\exp\!\left(-\frac{a(P,R)}{\Omega_b}\right)
<0,
\]
which proves strict monotonicity.
\end{proof}

\begin{theorem}[Asymmetric extremizers]
\label{thm:asymmetric_extremizers}
Consider the minimal model under quasi-static fading, outage-based reliability, and the surrogate in \eqref{eq:xi_conditional_approx}. Then
\begin{enumerate}
\item the robust reliability constraint is attained at $\Omega_b = \Omega_b^{-}$;
\item the robust covertness constraint is attained at $\sigma_w^2 = \sigma_w^{2,-}$;
\item therefore, robust reliability and robust covertness are governed by different adverse parameters on the legitimate and warden links, respectively.
\end{enumerate}
\end{theorem}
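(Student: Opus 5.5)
The theorem follows by assembling the two monotonicity results already established, one for each link. I will treat the three items in order, interpreting ``attained at'' as the claim that the supremum (respectively infimum) appearing in the robust constraint of \eqref{eq:robust_design_problem} is achieved at the stated endpoint of the uncertainty interval.

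For item 1, I fix an arbitrary $P>0$ and $R>0$ and invoke Lemma~\ref{lem:bob_worst_case}. Strict monotonic decrease of $P_{\mathrm{out}}(\Omega_b;P,R)$ in $\Omega_b$ on the compact interval $[\Omega_b^{-},\Omega_b^{+}]$ gives that the supremum is a maximum attained uniquely at $\Omega_b^{-}$ (uniqueness when $\Omega_b^{-}<\Omega_b^{+}$). Hence the robust reliability constraint is equivalent to $P_{\mathrm{out}}(\Omega_b^{-};P,R)\le\delta$. The degenerate cases $P=0$ or $R=0$ need a brief remark, since the outage is then either identically one or identically zero and no distinguished extremizer is needed.

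For item 2, I replace $\xi^\star_{\mathrm{cond}}$ in \eqref{eq:robust_covertness_constraint} by the surrogate \eqref{eq:xi_conditional_approx}, as the hypothesis of the theorem prescribes. I then apply Lemma~\ref{lem:xi_sigma} for fixed $P>0$ and $g_w>0$. Strict increase in $\sigma_w^2$ means the infimum over $[\sigma_w^{2,-},\sigma_w^{2,+}]$ is a minimum attained at $\sigma_w^{2,-}$, which is exactly \eqref{eq:conditional_covertness_reduction}. For $P=0$ the surrogate equals $2Q(0)=1$ for every $\sigma_w^2$, so the constraint holds trivially. As a side remark, I can also cite Proposition~\ref{prop:avg_willie_benchmark} to show that the same endpoint governs the averaged benchmark.

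For item 3, I observe that the two extremizers are selected by independent parameters in a product uncertainty set. The Bob-side objective depends only on $\Omega_b$, and the Willie-side objective depends only on $\sigma_w^2$, so the joint worst case decouples. The reliability worst case is the lower endpoint of Bob's channel strength, while the covertness worst case is the lower endpoint of Willie's noise. The latter is the warden's most favorable noise condition, not a Bob-side condition.

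The main subtlety lies in item 3 rather than in any calculation: ``different adverse parameters'' must be made precise. My plan is to state it as follows. There is no single scalar ``environment harshness'' ordering under which both worst cases correspond to the same extreme. Increasing uncertainty on Bob's side and on Willie's side tightens different constraints through different endpoints. As a consequence, a design evaluated at any single point other than $(\Omega_b^{-},\sigma_w^{2,-})$, for instance the nominal $(\Omega_{b,0},\sigma_{w,0}^2)$ or a pessimistic ``high-noise'' point, fails to certify both constraints.
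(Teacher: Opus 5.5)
Your proposal is correct and matches the paper's proof: item 1 comes from Lemma~\ref{lem:bob_worst_case}, item 2 from Lemma~\ref{lem:xi_sigma} (equivalently \eqref{eq:conditional_covertness_reduction}), and item 3 follows immediately from the two. Your additional remarks are sound refinements that the paper leaves implicit: the degenerate cases $P=0$ or $R=0$, the decoupling over the product uncertainty set, and the point that any single evaluation point other than $(\Omega_b^{-},\sigma_w^{2,-})$ fails to certify both constraints (given nondegenerate intervals and strict monotonicity).
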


\begin{proof}
Item 1 follows from Lemma~\ref{lem:bob_worst_case}. Item 2 follows from Lemma~\ref{lem:xi_sigma}, since $\xi_{\mathrm{cond}}^{(\mathrm{apx})}(P,\sigma_w^2 \mid g_w)$ is strictly increasing in $\sigma_w^2$. Item 3 follows immediately from Items 1 and 2.
\end{proof}

\begin{corollary}[Reduced robust constraints]
\label{cor:reduced_robust_constraints}
Under Theorem~\ref{thm:asymmetric_extremizers}, the robust design problem in \eqref{eq:robust_design_problem} reduces to the equivalent form
\begin{align}
\max_{P,R}\quad & R
\nonumber\\
\text{s.t.}\quad
& 1-\exp\!\left(
-\frac{(2^R-1)\sigma_b^2}{P\Omega_b^-}
\right)\le \delta,
\label{eq:reduced_reliability_constraint}\\
& 2Q\!\left(
\frac{\sqrt{N}\,P\,g_w}{2\sigma_w^{2,-}}
\right)\ge 1-\epsilon,
\label{eq:reduced_covertness_constraint}\\
& 0 \le P \le P_{\max}.
\label{eq:reduced_power_constraint}
\end{align}
\end{corollary}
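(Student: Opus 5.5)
The plan is to show that the feasible set of \eqref{eq:robust_design_problem}, with $\xi^\star_{\mathrm{cond}}$ replaced by the surrogate \eqref{eq:xi_conditional_approx} as Theorem~\ref{thm:asymmetric_extremizers} stipulates, coincides with the feasible set defined by \eqref{eq:reduced_reliability_constraint}--\eqref{eq:reduced_power_constraint}. The objective $R$ is unchanged, so equal feasible sets give the same optimal value and the same maximizers. I would establish the equality one constraint at a time. The power-budget constraint is literally the same in both problems, and the constraints are separate conjuncts, so each uncertain constraint can be handled independently.

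\emph{Reliability constraint.} For $P>0$, Lemma~\ref{lem:bob_worst_case} (equivalently Item~1 of Theorem~\ref{thm:asymmetric_extremizers}) gives
\[
\sup_{\Omega_b\in[\Omega_b^-,\Omega_b^+]} P_{\mathrm{out}}(\Omega_b;P,R)=P_{\mathrm{out}}(\Omega_b^-;P,R).
\]
Requiring this supremum to be at most $\delta$ is therefore the same as \eqref{eq:reduced_reliability_constraint}, after substituting \eqref{eq:pout_minimal} at $\Omega_b=\Omega_b^-$. The lemma assumes $R>0$; for $R=0$ the outage probability is identically zero for every $\Omega_b$, so both forms hold trivially.

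\emph{Covertness constraint.} For $P>0$, Item~2 of Theorem~\ref{thm:asymmetric_extremizers}, via Lemma~\ref{lem:xi_sigma} and the identity \eqref{eq:conditional_covertness_reduction}, shows that the infimum over $\sigma_w^2$ is attained at $\sigma_w^{2,-}$. Writing out \eqref{eq:xi_conditional_approx} at this value gives exactly \eqref{eq:reduced_covertness_constraint}.

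The only delicate step is the boundary case $P=0$, because the monotonicity lemmas assume $P>0$. In \eqref{eq:reduced_reliability_constraint} the expression is undefined at $P=0$. I would read it, as usual, through the limit $P\downarrow0$, which makes the constraint violated for $R>0$. The original constraint behaves the same way, since no transmission means certain outage for any $\Omega_b$. For covertness, at $P=0$ both sides equal $2Q(0)=1\ge 1-\epsilon$ independently of $\sigma_w^2$, so the two forms agree. With all constraints matched pointwise in $(P,R)$, the two problems are equivalent, which completes the argument.
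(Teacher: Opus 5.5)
Your proposal is correct and follows the paper's own proof: you reduce the reliability constraint via Lemma~\ref{lem:bob_worst_case} and the covertness constraint via Lemma~\ref{lem:xi_sigma} and \eqref{eq:conditional_covertness_reduction}, after substituting the surrogate for $\xi^\star_{\mathrm{cond}}$. Your separate treatment of the boundary cases $R=0$ and $P=0$, which the monotonicity lemmas exclude, is extra care that the paper's one-line proof omits.
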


\begin{proof}
The reliability reduction follows from Lemma~\ref{lem:bob_worst_case}, and the covertness reduction follows from \eqref{eq:conditional_covertness_reduction}.
\end{proof}

\begin{remark}[Conditional and averaged viewpoints]
\label{rem:conditional_interpretation}
The closed-form power bound and robust-rate expression developed below are parameterized by a given realization of $g_w = |h_w|^2$, which keeps the main design formulas simple and transparent. At the same time, Proposition~\ref{prop:avg_willie_benchmark} shows that after averaging over Willie-side Rayleigh fading, the same Willie-side adverse noise boundary $\sigma_w^{2,-}$ remains governing. Hence, the asymmetric-extremizer conclusion is not an artifact of conditioning on a fixed $g_w$.
\end{remark}

\begin{proposition}[Conditional robust optimal power and rate]
\label{prop:robust_opt_power_rate}
Under the reduced robust constraints in Corollary~\ref{cor:reduced_robust_constraints}, the largest robustly feasible transmit power is
\begin{equation}
P^\star
=
\min\!\left\{
P_{\max},
\frac{2\sigma_w^{2,-}}{\sqrt{N}\,g_w}
Q^{-1}\!\left(\frac{1-\epsilon}{2}\right)
\right\}.
\label{eq:pstar_robust}
\end{equation}
The corresponding maximum robust coding rate is
\begin{equation}
R^\star_{\mathrm{rob}}
=
\log_2\!\left(
1+
\frac{P^\star \Omega_b^-}{\sigma_b^2}
\ln\!\frac{1}{1-\delta}
\right).
\label{eq:rstar_robust}
\end{equation}
\end{proposition}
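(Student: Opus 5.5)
The plan is to work directly with the reduced problem of Corollary~\ref{cor:reduced_robust_constraints}. The main idea is that the constraints decouple. Only the covertness constraint \eqref{eq:reduced_covertness_constraint} restricts $P$ from above. For any fixed $P$, the reliability constraint \eqref{eq:reduced_reliability_constraint} gives an upper bound on $R$ that is increasing in $P$. So the proof splits into two steps: first find the feasible power set, then maximize $R$ over it.

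\emph{Step 1 (feasible power set).} Since $Q$ is continuous and strictly decreasing, $Q^{-1}$ is well defined on $(0,1)$. The covertness constraint $2Q(\sqrt{N}Pg_w/(2\sigma_w^{2,-}))\ge 1-\epsilon$ is therefore equivalent to
\[
\frac{\sqrt{N}Pg_w}{2\sigma_w^{2,-}}\le Q^{-1}\!\left(\frac{1-\epsilon}{2}\right).
\]
This uses $(1-\epsilon)/2\in(0,1/2)$, so that $Q^{-1}((1-\epsilon)/2)>0$ and the ceiling is strictly positive. By Lemma~\ref{lem:xi_power}, the set of $P\ge 0$ satisfying the constraint is exactly the interval $[0,P^\star_{\mathrm{cov}}]$. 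Intersecting it with \eqref{eq:reduced_power_constraint} gives $[0,P^\star]$, with $P^\star$ as in \eqref{eq:pstar_robust}. This proves the first claim, because the maximum of this interval is $P^\star$.

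\emph{Step 2 (rate for fixed $P>0$).} Rearrange \eqref{eq:reduced_reliability_constraint} as
\[
\exp\!\left(-\frac{(2^R-1)\sigma_b^2}{P\Omega_b^-}\right)\ge 1-\delta.
\]
Take logarithms, using $\delta\in(0,1)$ so that $\ln\frac{1}{1-\delta}>0$. This gives $(2^R-1)\le \frac{P\Omega_b^-}{\sigma_b^2}\ln\frac{1}{1-\delta}$, or equivalently
\[
R\le \log_2\!\left(1+\frac{P\Omega_b^-}{\sigma_b^2}\ln\frac{1}{1-\delta}\right)\triangleq \bar R(P).
\]
Every step is an equivalence, because $\exp$ and $\log_2$ are strictly monotone. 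The bound is therefore attained with equality, and $\bar R(P)$ is the largest feasible rate at power $P$.

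\emph{Step 3 (joint maximization).} The function $\bar R(P)$ is strictly increasing in $P$. Hence $\sup_{P\in[0,P^\star]}\bar R(P)$ is attained at $P=P^\star$, which gives \eqref{eq:rstar_robust}. The pair $(P^\star,R^\star_{\mathrm{rob}})$ is feasible by construction, and any feasible $(P,R)$ satisfies $R\le\bar R(P)\le\bar R(P^\star)$, which establishes optimality.

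The only delicate point is the handling of edge cases rather than any computation. One must confirm that $Q^{-1}((1-\epsilon)/2)$ is positive, so that $P^\star>0$ and the problem is nondegenerate. One must also treat $P=0$ separately, where $P_{\mathrm{out}}$ is read as $1$ and no positive rate is feasible. Neither case affects the maximizer, since $P^\star>0$.
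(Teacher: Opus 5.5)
Your proposal is correct and follows essentially the same route as the paper. Both arguments invert the reduced covertness constraint using the strict monotonicity of $Q$ and intersect the result with $[0,P_{\max}]$; both then solve the reliability constraint for $R$ at fixed $P$ and use the fact that the bound is increasing in $P$ to evaluate it at $P^\star$. Your additional details — that $Q^{-1}((1-\epsilon)/2)>0$ because $(1-\epsilon)/2<1/2$, that each step is an equivalence, the $P=0$ case, and the explicit optimality chain $R\le\bar R(P)\le\bar R(P^\star)$ — only make explicit what the paper's shorter proof leaves implicit.
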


\begin{proof}
From \eqref{eq:reduced_covertness_constraint},
\[
P \le
\frac{2\sigma_w^{2,-}}{\sqrt{N}\,g_w}
Q^{-1}\!\left(\frac{1-\epsilon}{2}\right).
\]
Combining this with \eqref{eq:reduced_power_constraint} gives \eqref{eq:pstar_robust}. Next, \eqref{eq:reduced_reliability_constraint} implies
\[
R \le
\log_2\!\left(
1+
\frac{P\Omega_b^-}{\sigma_b^2}
\ln\!\frac{1}{1-\delta}
\right).
\]
Since the right-hand side is strictly increasing in $P$, the maximal feasible rate is attained at $P=P^\star$, which yields \eqref{eq:rstar_robust}.
\end{proof}

Theorem~\ref{thm:asymmetric_extremizers} and Proposition~\ref{prop:robust_opt_power_rate} together convey the main structural message of the paper. Bob-side robustness is dictated by the smallest admissible Bob-side channel parameter, whereas Willie-side robustness is dictated by the smallest admissible Willie-side noise power. Robust covert design is therefore conflict-aware: the two constraints are governed by different adverse parameters.

\section{Numerical Results}
\label{sec:numerical_results}

This section illustrates the implications of the proposed robust design framework. The numerical study has three main objectives: to visualize the difference between nominal and robust feasible designs, to quantify the rate penalty caused by uncertainty, and to assess the accuracy of the conditional large-$N$ midpoint-threshold surrogate used in the analysis, including comparison with an exact-threshold radiometer benchmark.

The numerical results are aligned with the analytical model developed in Sections~\ref{sec:willie_analysis} and~\ref{sec:main_results}. Accordingly, unless otherwise stated, the Willie-side results are evaluated at a fixed representative fading realization $g_w$.

\subsection{Setup and Baseline Parameters}

Unless otherwise stated, the baseline parameters are listed in Table~\ref{tab:baseline_parameters}.
\begin{table}[!b]
\centering
\caption{Baseline parameters used in the numerical results unless otherwise stated.}
\label{tab:baseline_parameters}
\begin{tabular}{c c c c}
\hline
Parameter & Value & Parameter & Value \\
\hline
$\Omega_{b,0}$ & 1 & $\sigma_{w,0}^2$ & 1 \\
$\sigma_b^2$ & 1 & $P_{\max}$ & 10 \\
$N$ & 100 & $\delta$ & 0.1 \\
$\epsilon$ & 0.2 & $g_w$ & 0.2 \\
\hline
\end{tabular}
\end{table}
For the one-dimensional uncertainty sweeps, we use the symmetric parameterization
\begin{equation*}
\Omega_b^-=\Omega_{b,0}(1-u),
\qquad
\sigma_w^{2,-}=\sigma_{w,0}^2(1-u),
\end{equation*}
where $u\in[0,0.6]$ denotes the common uncertainty width. When analyzing the Bob-side and Willie-side uncertainty sources separately, we denote their widths by $u_b$ and $u_w$, respectively. Unless explicitly stated otherwise, the nominal benchmark uses the reference values $\Omega_{b,0}$ and $\sigma_{w,0}^2$.

\subsection{Feasible Region and Rate Degradation}

Fig.~\ref{fig:feasible_region} compares the nominal and robust feasible regions in the $(P,R)$ plane. The nominal design uses the reference parameters $(\Omega_{b,0},\sigma_{w,0}^2)$, whereas the robust design enforces the constraints at the adverse realizations $(\Omega_b^-,\sigma_w^{2,-})$. For this figure, we use $u_b=u_w=0.2$.

Two observations are immediate. First, the robust feasible region is strictly smaller than the nominal one. Second, the reduction occurs through two different mechanisms: the reliability ceiling is lowered by the weaker Bob-side parameter $\Omega_b^-$, whereas the feasible power ceiling is tightened by the smaller Willie-side noise level $\sigma_w^{2,-}$. The robust operating point therefore shifts both leftward and downward relative to the nominal operating point.

\begin{figure}[t]
    \centering
    \includegraphics[width=0.9\linewidth]{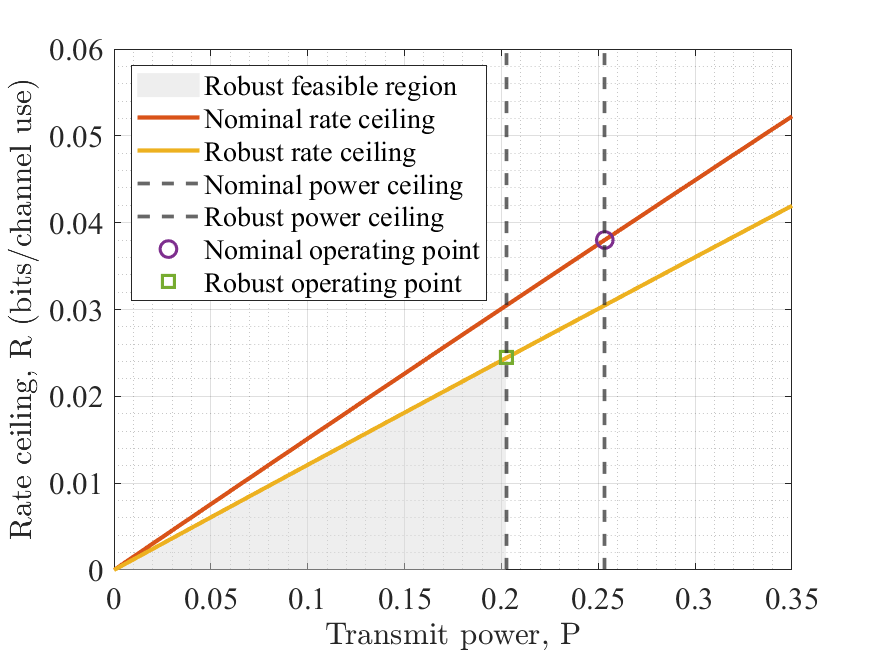}
    \caption{Nominal and robust feasible regions in the $(P,R)$ plane. The robust design yields both a lower rate ceiling and a tighter power ceiling, shifting the operating point downward and leftward relative to the nominal design.}
    \label{fig:feasible_region}
\end{figure}

Fig.~\ref{fig:rate_vs_uncertainty} shows the nominal and robust optimal rates as functions of the common uncertainty width $u$. As expected, the two designs coincide at $u=0$, since the robust and nominal formulations are then identical. As $u$ increases, however, the robust optimal rate decreases monotonically, while the nominal benchmark remains fixed because it continues to use the reference parameter values.

\begin{figure}[t]
    \centering
    \includegraphics[width=0.9\linewidth]{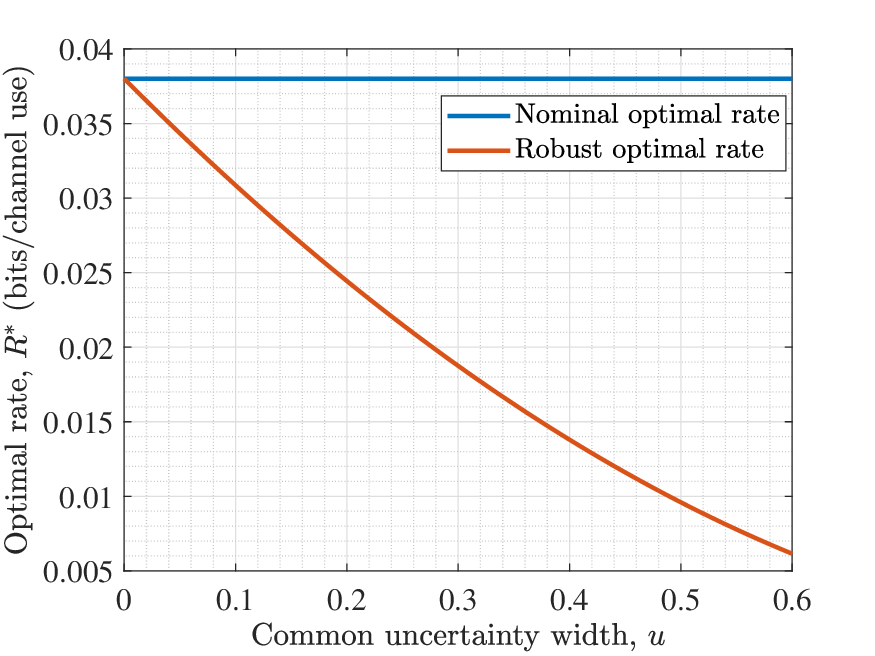}
    \caption{Nominal and robust optimal rates versus the common uncertainty width $u$. The robust rate decreases monotonically with uncertainty, while the nominal benchmark remains fixed in this setup.}
    \label{fig:rate_vs_uncertainty}
\end{figure}

To make the penalty more explicit, Fig.~\ref{fig:relative_loss} plots the relative rate loss
\begin{equation*}
\Delta_R
=
\frac{R_{\mathrm{nom}}^\star-R_{\mathrm{rob}}^\star}
{R_{\mathrm{nom}}^\star},
\end{equation*}
where $R_{\mathrm{nom}}^\star$ is the optimal rate of the nominal design. The loss starts at zero and grows monotonically with uncertainty. Even in this deliberately minimal model, bounded uncertainty produces a substantial covert-throughput penalty.

\begin{figure}[t]
    \centering
    \includegraphics[width=0.9\linewidth]{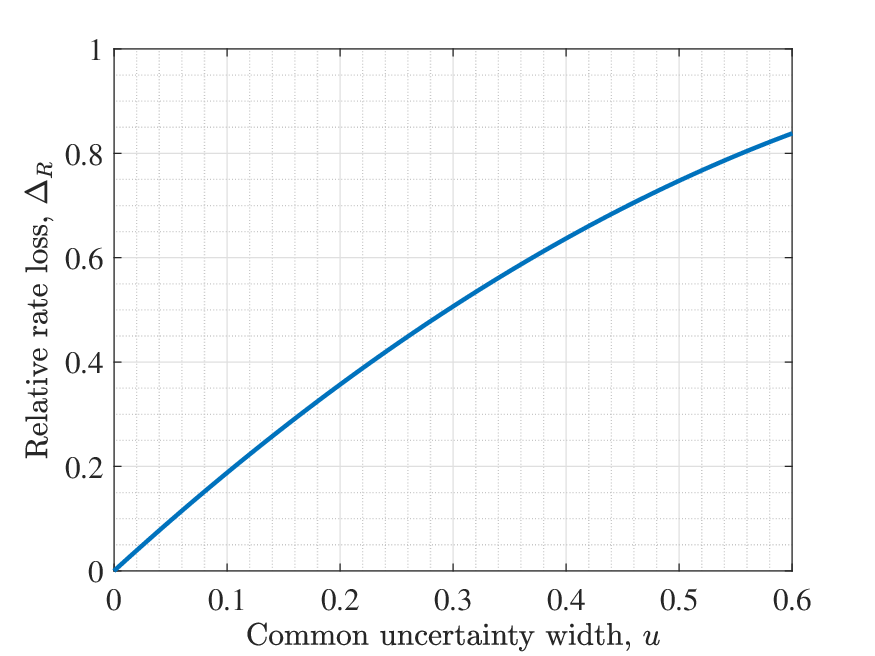}
    \caption{Relative robust-rate loss versus the common uncertainty width $u$. The loss grows monotonically, showing that uncertainty can impose a substantial covert-throughput penalty.}
    \label{fig:relative_loss}
\end{figure}

\subsection{Two-Dimensional Robustness Landscape and Power Ceiling}

Fig.~\ref{fig:heatmap_robust_rate} presents a two-dimensional heatmap of the robust optimal rate $R_{\mathrm{rob}}^\star(u_b,u_w)$. The heatmap is generated over independent grids $u_b,u_w\in[0,0.6]$. The largest rates occur near $(u_b,u_w)=(0,0)$, while the lowest rates occur when both uncertainty widths are large.

This figure shows that uncertainty on either side reduces the achievable robust covert rate, even though the two constraints are governed by different adverse parameters. Under the symmetric parameterization adopted here, the numerical values also exhibit the symmetry
\begin{equation*}
R_{\mathrm{rob}}^\star(u_b,0)=R_{\mathrm{rob}}^\star(0,u_b).
\end{equation*}
In the present model, this symmetry is induced by the simplified robust-rate expression together with the symmetric parameterization of the uncertainty widths. It therefore concerns the final rate values, not the asymmetric-extremizer result itself.

\begin{figure}[t]
    \centering
    \includegraphics[width=0.9\linewidth]{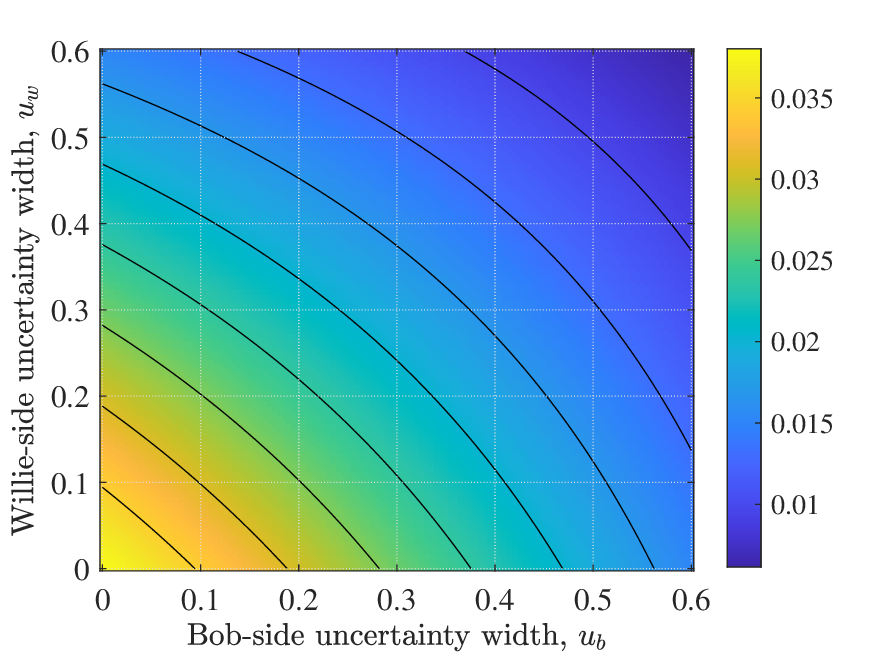}
    \caption{Heatmap of the robust optimal rate $R_{\mathrm{rob}}^\star(u_b,u_w)$ as a function of Bob-side and Willie-side uncertainty widths. The rate decreases as either uncertainty source increases.}
    \label{fig:heatmap_robust_rate}
\end{figure}

Fig.~\ref{fig:power_ceiling} helps explain the rate degradation by showing the feasible power ceiling as a function of uncertainty width. The nominal power ceiling remains constant, whereas the robust power ceiling decreases as uncertainty grows. This follows directly from the reduced covertness constraint, since the minimum admissible Willie-side noise power decreases with uncertainty. The reduced power budget then lowers the maximum achievable rate through the Bob-side outage constraint.

\begin{figure}[t]
    \centering
    \includegraphics[width=0.9\linewidth]{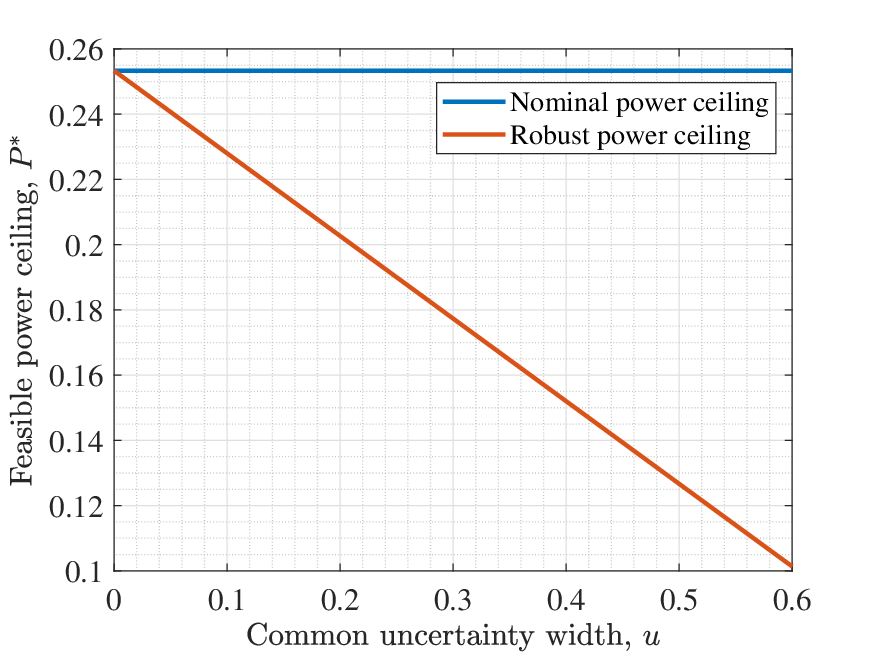}
    \caption{Nominal and robust feasible power ceilings versus the common uncertainty width $u$. The robust power ceiling decreases with uncertainty, explaining the reduction in robust optimal rate.}
    \label{fig:power_ceiling}
\end{figure}

Table~\ref{tab:representative_results} reports representative numerical values for several uncertainty settings. It confirms the same trends seen in the figures: both the feasible power and the robust optimal rate decrease with uncertainty, and the relative rate loss can become large even at moderate uncertainty widths.

\begin{table}[t]
\centering
\caption{Representative numerical results for the robust design under different uncertainty settings. Here $P_{\mathrm{nom}}^\star$ and $R_{\mathrm{nom}}^\star$ denote the optimal power and rate of the nominal design, with values $P_{\mathrm{nom}}^\star = 0.253347$ and $R_{\mathrm{nom}}^\star = 0.038005$. The relative rate loss is defined as $\Delta_R=(R_{\mathrm{nom}}^\star-R_{\mathrm{rob}}^\star)/R_{\mathrm{nom}}^\star$.}
\label{tab:representative_results}
\begin{tabular}{c c c c c}
\hline
$u_b$ & $u_w$ & $P^\star$ & $R_{\mathrm{rob}}^\star$ & $\Delta_R$ (\%) \\
\hline
0.0 & 0.0 & 0.253347 & 0.038005 & 0.00 \\
0.1 & 0.1 & 0.228012 & 0.030860 & 18.80 \\
0.2 & 0.2 & 0.202678 & 0.024438 & 35.70 \\
0.3 & 0.3 & 0.177343 & 0.018747 & 50.67 \\
0.4 & 0.4 & 0.152008 & 0.013797 & 63.70 \\
0.6 & 0.6 & 0.101339 & 0.006148 & 83.82 \\
0.3 & 0.0 & 0.253347 & 0.026708 & 29.72 \\
0.0 & 0.3 & 0.177343 & 0.026708 & 29.72 \\
0.6 & 0.0 & 0.253347 & 0.015322 & 59.68 \\
0.0 & 0.6 & 0.101339 & 0.015322 & 59.68 \\
\hline
\end{tabular}
\end{table}

\subsection{Validation of the Large-$N$ Approximation and Threshold Choice}

Fig.~\ref{fig:xi_validation} compares the analytical surrogate in \eqref{eq:xi_conditional_approx} with two Monte Carlo benchmarks: one using the midpoint threshold in \eqref{eq:midpoint_threshold}, and one using the exact threshold that minimizes $P_{\mathrm{FA}}+P_{\mathrm{MD}}$ for the radiometer statistic.

Under $\mathcal H_0$ and under $\mathcal H_1$ conditioned on the fixed value of $g_w$, the statistic $T(y_w)$ is gamma distributed with common shape parameter $N$ and different scale parameters, with means
\[
\mu_0=\sigma_w^2,
\qquad
\mu_1=\sigma_w^2+Pg_w.
\]
Since the likelihood ratio is monotone in $T(y_w)$, the exact threshold is obtained by equating the two gamma densities, which gives
\[
\lambda^\star_{\mathrm{ex}}
=
\frac{\mu_0\mu_1}{\mu_1-\mu_0}
\ln\!\left(\frac{\mu_1}{\mu_0}\right).
\]

The midpoint threshold and $\lambda^\star_{\mathrm{ex}}$ are nearly identical in the intended low-effective-SNR regime $Pg_w \ll \sigma_w^2$. Consequently, the two Monte Carlo curves are visually indistinguishable in Fig.~\ref{fig:xi_validation}. This confirms that, in the operating regime relevant to the paper, the midpoint-threshold approximation is not only analytically convenient but also practically accurate. At the same time, the analytical surrogate remains close to both Monte Carlo benchmarks over the tested power range, with only a modest deviation emerging as the transmit power increases and the system moves away from the equal-variance regime. Across the tested power range, the analytical surrogate exhibits a mean absolute error of about $4\times 10^{-3}$ and a maximum absolute error of about $1.2\times 10^{-2}$ relative to the Monte Carlo benchmarks.

\begin{figure}[t]
    \centering
    \includegraphics[width=0.9\linewidth]{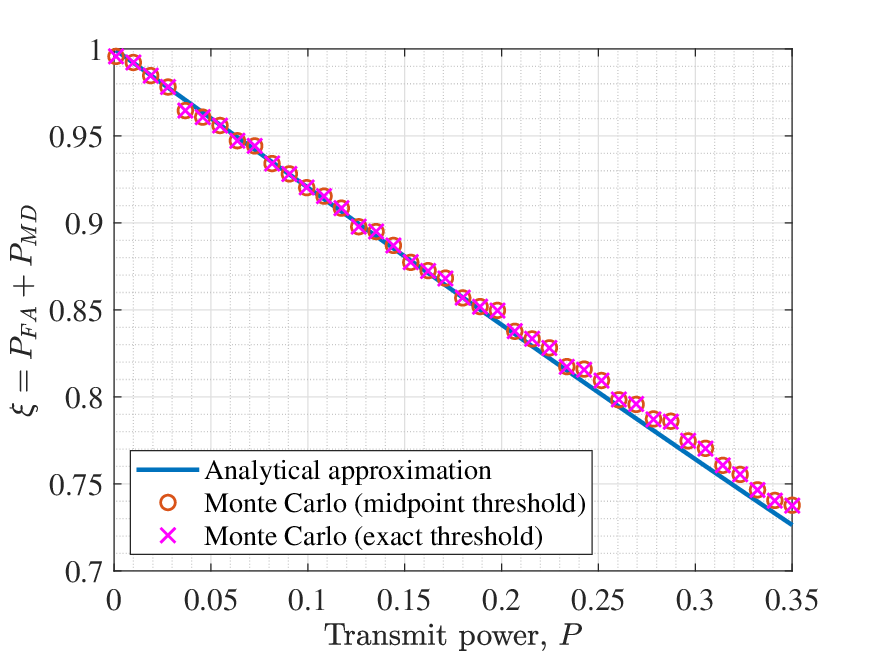}
    \caption{Validation of the conditional large-$N$ midpoint-threshold surrogate for Willie's detection error against Monte Carlo radiometer benchmarks using both the midpoint threshold and the exact likelihood-ratio threshold. In the intended low-effective-SNR regime, the two Monte Carlo curves are visually indistinguishable, confirming that the midpoint threshold is an accurate approximation in this setting.}
    \label{fig:xi_validation}
\end{figure}

Overall, the numerical results support the analytical message of the paper. The robust feasible set is smaller than the nominal one, the robust optimal rate decreases systematically with uncertainty, and the resulting performance loss can be substantial. In addition, the Willie-side validation shows that the midpoint-threshold and exact-threshold radiometer benchmarks are visually indistinguishable in the intended low-effective-SNR regime, while both remain close to the analytical surrogate over the tested power range.

\section{Conclusion}
\label{sec:conclusion}

This paper studied robust covert wireless communication under bounded uncertainty in a minimal single-warden setting. Using quasi-static fading, outage-based reliability at Bob, and radiometric detection at Willie, we formulated a robust design problem in which uncertainty enters through Bob's average channel strength and Willie's noise power. The main goal was to understand how bounded uncertainty shapes the structure of robust covert design.

The analysis showed that the worst-case reliability constraint is governed by the smallest admissible Bob-side channel parameter, whereas the worst-case covertness constraint is governed by the smallest admissible Willie-side noise level within the adopted analytical framework. The same Willie-side adverse noise boundary also remains governing under an averaged Rayleigh-fading benchmark on the warden link. Hence, even in this streamlined model, robust covert design is conflict-aware: the adverse realization for reliability differs from that for covertness. Based on this observation, we derived closed-form expressions for the robustly feasible transmit power and the corresponding robust optimal rate under a conditional large-$N$ midpoint-threshold surrogate for Willie's detector.

The numerical results support this structural picture. The robust feasible region is strictly smaller than the nominal one, the robust optimal rate decreases systematically with uncertainty, and the resulting rate loss can be substantial. In addition, the Monte Carlo results show that the midpoint-threshold and exact-threshold radiometer benchmarks are visually indistinguishable in the intended low-effective-SNR regime, while the analytical surrogate closely tracks both.

Overall, the paper provides a simple analytical framework that makes the robust-design geometry transparent and highlights the throughput penalty caused by bounded uncertainty. Although the model is deliberately streamlined, it offers a clear starting point for richer extensions, including finite-block detection analysis, exact finite-$N$ detector optimization, and broader uncertainty models.

\bibliographystyle{IEEEtran}
\bibliography{references}
\end{document}